\documentclass[twoside,11pt]{article}
\usepackage{obs_study_style}

\usepackage[utf8]{inputenc}%(only for the pdftex engine)

\usepackage[all]{xy}
\usepackage{amsmath}
\usepackage{csquotes}
\usepackage{subfig}
\usepackage{enumerate}
\usepackage{enumerate}
\usepackage{color}
\usepackage{mwe}
\usepackage{appendix}
\usepackage{centernot}
\usepackage{url,hyperref,xspace}
\usepackage[ruled,vlined]{algorithm2e}

\usepackage{soul}
\usepackage{xcolor}
\def\@centernot#1#2{%
  \mathrel{%
    \rlap{%
      \settowidth\dimen@{$\m@th#1{#2}$}%
      \kern.5\dimen@
      \settowidth\dimen@{$\m@th#1=$}%
      \kern-.5\dimen@
      $\m@th#1\not$%
    }%
    {#2}%
  }%
}
\makeatother

\newcommand{\bm}[1]{\mbox{$\mathbf #1$}}
\newcommand{\bY}{\bm{Y}}

% latin abbreviations 
\newcommand{\ie}{{\em i.e.\/}\xspace}
\newcommand{\eg}{{\em e.g.\/}\xspace}

\newcommand{\etc}{{\em etc.\/}\xspace}

\renewcommand{\eqref}[1]{\mbox{(\ref{eq:#1})}}

\newcommand{\secref}[1]{\mbox{\S$\,$\ref{sec:#1}}}

\newcommand{\tabref}[1]{\mbox{Table~\ref{tab:#1}}}

\newcommand{\thmref}[1]{\mbox{Theorem~\ref{thm:#1}}}

\newcommand{\corref}[1]{\mbox{Corollary~\ref{cor:#1}}}

\newcommand{\remref}[1]{\mbox{Remark~\ref{rem:#1}}}

\newcommand{\exref}[1]{\mbox{Example~\ref{ex:#1}}}

\newcommand{\Secref}[1]{\mbox{Section~\ref{sec:#1}}}

\newcommand{\setto}{\leftarrow}
\newcommand{\defeq}{:=}
\newcommand{\half}{\frac 1 2}

\newcommand{\ite}{{\rm ITE}}
\newcommand{\ate}{{\rm ATE}}
\newcommand{\cate}{{\rm CATE}}
\newcommand{\E}{{\rm E}}
\newcommand{\pb}{{\rm PB}}
\newcommand{\ph}{{\rm PH}}
\newcommand{\etal}{{\em et al.\/}}

% \usepackage{tikz}
% \usetikzlibrary{arrows.meta,shapes}
% \usetikzlibrary{arrows,shapes.arrows,shapes.geometric,shapes.multipart,
% decorations.pathmorphing,positioning,shapes.swigs,}

%\usepackage{setspace}
%\doublespacing

% \newtheorem{lemma}{Lemma}
% \newtheorem{theorem}{Theorem}
\newtheorem{cor}{Corollary}
% \newtheorem{corollary}{Corollary}
% \newtheorem{proposition}{Proposition}
%\theoremstyle{definition}

%\newtheorem{definition}{Definition}
%\newtheorem{example}{Example} %[section]
%\theoremstyle{remark}
%\newtheorem{remark}{Remark}

% \MakeOuterQuote{"}\EnableQuotes
% \DeclareUnicodeCharacter{00A0}{ }
% \DeclareMathOperator{\Tr}{Tr}

% \articletype{Research Article{\hfill}Open Access}
% 	\received{Month	DD, YYYY}
% 	\revised{Month	DD, YYYY}
%   \accepted{Month	DD, YYYY}
%   \journalname{Journal of Causal Inference}
%   \journalyear{YYYY}
%   \journalvolume{XX}
%   \journalissue{X}
%   \startpage{1}
%   \aop
%   \DOI{10.1515/sample-YYYY-XXXX}

\heading{}{}{}{}{}{A.~Philip~Dawid and Stephen~Senn}
\ShortHeadings{Personalised Decision-Making}{Dawid and Senn}

\firstpageno{1}

\begin{document}

\title{Personalised Decision-Making without Counterfactuals}
\author{\name A.~Philip~Dawid \email apd@statslab.cam.ac.uk\\
  \addr Statistical Laboratory, University of Cambridge
  \AND
  \name Stephen~Senn \email stephen@senns.uk\\
\addr Statistical Consultant, Edinburgh
}

\date{Draft of \today}

\maketitle
\begin{abstract}%
  This article is a response to recent proposals by Judea
  Pearl and others for a new approach to personalised treatment
  decisions, in contrast to the traditional one based on statistical
  decision theory.  We argue that this approach is dangerously
  misguided and should not be used in practice.
\end{abstract}

\begin{keywords}
  decision theory,
counterfactual,
potential response,
preferred treatment
\end{keywords}

\section{Introduction}
\label{sec:intro}
In recent works
\citep{mueller/pearl:blog,mueller/pearl:intro,li/pearl:ijcai19,li:22},
Judea Pearl and collaborators have set out an approach to personalised
treatment that is radically different from that based on traditional
statistical decision theory.  It is based on the conception that we
should care, not only about the outcome that actually materialises,
but also about the (necessarily unobserved, counterfactual) outcome
that, it is supposed, would have occurred under the treatment that was
not applied.  A similar conception forms the basis of other recent
work \citep{imai:jrssa,richens,benmichael}.  For earlier work in the
same vein, see \citet{gadbury/iyer:bcs, gadbury/iyer/albert:jspi}.

We consider this approach to be dangerously misguided, and believe
that real harm will ensue if it is applied in practice.  We argue our
case from a number of different viewpoints, and, expanding on the
comments of \citet{apd:imaidisc,sarvet:imaidisc}, explain why this
approach should not be regarded as a viable alternative to standard
statistical decision theory.

\subsection{Basic set-up}
The context is that of a ``target'' patient suffering from a disease,
for which a treatment is available.  The treatment is far from
perfect, so that not all treated patients recover, while some
untreated patients may recover anyway.  There is information available
on recovery rates for treated and untreated patients; both these rates
may depend on measured individual patient characteristics.  The basic
problem is to decide, on the basis of the target patient's own
characteristics, whether or not to treat him.  A variation is how to
prioritise patients for treatment when there are limited doses
available.

We introduce notation as follows:
\begin{description}
\item[Treatment decision] Binary decision variable $X$, coded 1 for
  treat, 0 for don't treat
\item[Response] Binary stochastic variable $Y$, coded 1 for recovery,
  0 for no recovery
\item[Individual background characteristics] Stochastic covariate $L$,
  possibly multivariate
\end{description}
We assume that that $L$ is determined and potentially observable prior
to any treatment decision, so that, in particular,
\begin{equation}
  \label{eq:l}
  \Pr(L=l \mid X \setto x) = \Pr(L=l).
\end{equation}
(Here $X \setto x$ denotes that an external intervention sets $X$ to
value $x$.)

For simplicity, we suppose that we can estimate, essentially
perfectly\footnote{\eg, from substantial data on $(L,X,Y)$, from
  experimental (or unconfounded observational) studies on patients
  whom we can regard as similar to the target (where the intuitive
  idea of {\em similarity\/} can be made mathematically precise in
  terms of {\em exchangeability\/}, as described for example in
  \citet{apd:found}).  \Secref{prag} below discusses some of the
  problems with this assumption in practice.},
%\footnote{See \secref{prag} for further discussion of this point.}
% \footnote{In a Bayesian setting it is straightforward to relax this
% condition, using predictive distributions based on finite samples.
% However the main issues are most clearly expressed in the case of
% essentially known probabilities.}
the distribution of $Y$, conditional on $L$, under
either treatment intervention.  That is, we know the probabilities
\begin{eqnarray}
  \label{eq:p}
  p &:=& \Pr(Y=1 \mid L=l, X \setto 1)\\
  \label{eq:q}
  q &:=& \Pr(Y=1 \mid L=l, X \setto 0)
\end{eqnarray}
for any value $l$ of $L$.
The task is to use this
information to inform the treatment of the target patient.

\subsection{Outline}
\label{sec:outline}
We start in \secref{approach} with a brief outline of the the novel
approach proposed by Mueller, Pearl \etal.  This is contrasted, in
\secref{DT}, with a description of the traditional approach based on
statistical decision theory.  \Secref{analysis} considers the
Mueller/Pearl approach in more detail, following a logical path that
relates it to other problems, in particular the use of general
covariate information to strengthen conclusions, and introduces the
``preferred treatment'' covariate, whose probabilistic properties can
be identified by combining experimental and observational data; this
allows a deeper understanding of MP's analysis of this case.  We
follow this with some critical comments in \secref{comments}. In
\secref{ex} we give critical consideration to some examples from
\citet{mueller/pearl:intro}.  In \secref{super} we show how
information extracted from combining experimental and observational
data can be leveraged in the decision-theoretic framework to improve
personalised decision making.  \Secref{prag} notes some important
assumptions that are implicitly made in the analysis, and points out
that they are unlikely to hold in practice (although, purely for
argument's sake, we have largely accepted those assumptions prior to
this point).  \Secref{disc} summarises our analysis and conclusions.

%%% Local Variables:
%%% mode: latex
%%% TeX-master: "main2"
%%% End:

% \section{The Pearlian approach}
% \label{sec:pearl}
% \input{pearl}

\section{The approach of Mueller and Pearl%~\cite{mueller/pearl:intro}
}
\label{sec:approach}

\subsection{Potential outcomes}
\label{perf}
The approach of \citet{mueller/pearl:intro} (henceforth MP) is based
on the concept of {\em potential outcomes\/} \citep{dbr:jep}.  This
conceives of the existence, even prior to treatment choice, of the
pair of variables $\bY = (Y(1), Y(0))$, where $Y(x)$ is intended to
denote the value that, it is supposed, $Y$ would take if an
intervention $X \setto x$ were to be applied.  The collection of
pretreatment variables $(Y(1), Y(0), L)$ is supposed to have a joint
distribution, necessarily unaffected by treatment.  With this
interpretation, we have
\begin{equation}
  \label{eq:interp}
  \Pr(Y(x)=y, L=l) = \Pr(Y=y, L=l \mid X\leftarrow x)
\end{equation}
% and $p = \E\{Y(1) \mid L = l \}$, $q = \E\{Y(0) \mid L = l\}$,
whence, by \eqref{l},
\begin{equation}
  \label{eq:interpcond}
  \Pr(Y(x)=y \mid L=l) = \Pr(Y=y \mid  L=l, X\leftarrow x).
\end{equation}

In this approach, inference is ideally focused on the ``individual
treatment effect'',
\begin{equation}
  \label{eq:ite}
  \ite := Y(1)-Y(0),
\end{equation}
which can take values $+1$ (when $Y(1) = 1, Y(0)=0$: treatment
benefits the patient), $-1$ (when $Y(1) = 0, Y(0)=1$: treatment harms
the patient) or $0$ (treatment has no effect).  However, $\bY$, and
thus $\ite$, is never directly observable: this has been termed ``the
fundamental problem of causal inference'' \citep{pwh:jasa}.  Not only
can we not know a patient's \ite\ before the treatment decision is
made, we can not even know it later, when the outcome $Y$ is observed.
For if we treat the patient we will observe $Y = Y(1)$, but can not
then observe the counterfactual outcome $Y(0)$ relevant when we don't
treat; similarly, for an untreated patient we can observe $Y(0)$, but
not $Y(1)$.  So $\ite$ is always unobservable.  It follows that,
whatever data are available, no direct information can be gained about
the dependence between $Y(1)$ and $Y(0)$\footnote{Some indirect
  information arises from the fact that mere knowledge of marginal
  distributions imposes some inequality constraints on joint
  probabilities---see \secref{simple} below.}, and thus the full
distribution (marginal, or conditional on $L$) of $\ite$ can not be
estimated.  All that can be inferred is the conditional\footnote{as
  well as, of course, the unconditional expectation,
  $\ate=\E(\ite)= \E(\cate(L))$} expectation of $\ite$, the
``conditional average treatment effect'',
$\cate(l) :=\E(\ite \mid L = l)$, since, by \eqref{ite} and
\eqref{interp} with \eqref{p} and \eqref{q},
\begin{eqnarray*}
  \cate(l)   &=& \E\{Y(1) \mid L = l \}-\E\{Y(0) \mid L = l\}\\
             &=& \Pr(Y=1 \mid L=l, X \setto 1) - \Pr(Y=1 \mid L=l, X \setto 0)\\
             &=&p-q,
\end{eqnarray*}
which can thus be identified from experimental data.

In certain very special and highly atypical cases---essentially, those
where we have a fully deterministic and completely understood
mechanistic system governing the response to treatment---it could be
that the background knowledge $L$ is detailed enough to support
perfect prediction of the ensuing response, under either contemplated
intervention.  Then we will know, even in advance of treatment choice,
both potential outcome variables.  In this case $p = Y(1)$,
$q = Y(0)$, and $\cate(L) = \ite$.
% Clearly we should treat as many of those who will (we know for sure)
% benefit from the treatment as we can.
However, in normal cases such perfect prediction is impossible.  Then
it is arguable whether the potential responses, and so \ite, even have
any meaningful existence.

\subsection{MP approach}
\label{sec:mp-approach}
MP's approach is based, first, on the idea that we would ideally like
to treat just those patients having $\ite=1$, for whom the treatment
makes a positive difference: they would not recover without it.  It
would be wasteful to treat a patient with $\ite = 0$, for whom the
treatment makes no difference, and harmful to treat a patient with
$\ite = -1$, who would recover if untreated, but not if treated.

However, this ideal is unattainable, as we can never know a patient's
$\ite$.  Attention is therefore diverted to the ``probability of
benefit'', $\pb: = \Pr(\ite=1) = \Pr(Y(1)=1, Y(0)=0)$, and the
``probability of harm'',
$\ph: = \Pr(\ite=-1) = \Pr(Y(1)=0, Y(0)=1)$.\footnote{When additional
  covariate information is available for the target case, these
  probabilities should be further conditioned on that} MP's second
suggestion, therefore, is that $\pb$ and $\ph$ should constitute the
main criteria by which (somehow) to assess and compare treatment
strategies.

But, even with extensive data, on account of the fundamental problem
of causal inference it is not generally possible fully to identify
$\pb$ and $\ph$.  Such data can, however, be used to set interval
bounds on these quantities.  It is MP's third fundamental
argument---the only one of the three that is applicable in
practice---that this interval information should (somehow) constitute
the basis for decision-making.

As an additional wrinkle (that is however inessential to the main
argument), MP point out that combining experimental and observational
data can narrow the inferred interval bounds on \pb\ and \ph.  In
certain very special cases these bounds can narrow to a single point,
leading to full identification of \pb\ and \ph, but this is
exceptional.

For further description, analysis and critique of the MP approach, see
\secref{analysis} below.

%%% Local Variables:
%%% mode: latex
%%% TeX-master: "main2"
%%% End:

\section{Decision-theoretic approach}
\label{sec:DT}
We contrast the above with the traditional decision-theoretic (DT)
approach to treatment selection, which does not involve any
consideration of potential outcomes.

\subsection{Single patient decision problem}
\label{sec:single}

Consider first the case of a single target patient.  We have to decide
whether to offer this patient treatment, or not.

Having access only to the target patient's covariate value $L=l$, our
objective is to choose the treatment that will maximise the patient's
probability of recovery.  We should thus treat this patient if
$p= \Pr(Y=1 \mid L=l, X \setto 1) > q=\Pr(Y=1 \mid L=l, X \setto 0)$.
That is, we should treat just when $\cate(l) = p-q > 0$.\footnote{In
  the more general case that the outcome $Y$ is not binary, for
  example a survival time, we would need to associate a utility $U(y)$
  with the outcome $y$, and then treat the patient just when
  $\E\{U(Y) \mid L=l, X \setto 1)\} > \E\{U(Y) \mid L=l, X \setto
  0)\}$.  Applied to the binary case this reduces to the prescription
  above (so long as $U(1)> U(0)$).  In \citet{sarvet22} the above is
  termed the {\em interventionist\/} utility and approach, and
  contrasted with the {\em counterfactual\/} utility and approach
  implicit in \citet{mueller/pearl:intro} and explicit in
  \citet{benmichael,richens}.  Here we restrict to the binary case and
  do not use utilities.}

Faced with a large collection $G$ of patients to treat, and unlimited
supplies of the treatment, managing each patient (each with their own
value $l$ of $L$) according to the above rule will maximise the number
of recoveries.  That is, any other (deterministic or randomised)
decision rule that uses only the information on $L$ would lead to
fewer recoveries.

\begin{example}
  \label{ex:simple}
  Consider a case where
  \begin{eqnarray*}
  \Pr(Y=1 \mid  L=l, X\setto 1) &=& 0.49\\  
  \Pr(Y=1 \mid  L=l, X\setto 0) &=& 0.21.
  \end{eqnarray*}
  The conditional average treatment effect is
  $\cate(l) = 0.49 - 0.21 = 0.28$.  Since $\cate(l) > 0$, the optimal
  action is to treat this patient.  If we have a large collection of
  similar patients, with the same value $L=l$, they should all be
  treated---in which case the overall proportion of recovered patients
  will be $49\%$.  This is the best outcome that can be achieved by
  any treatment strategy: if we treated only a fraction $\alpha < 1$
  of these patients, the proportion recovering would be
  $\alpha\times 0.49 + (1-\alpha) \times 0.21 < 0.49$.
\end{example}

% \subsubsection{Utility}
% \label{sec:util}
% So far we have assumed the objective is to maximise the probability
% that the target individual will recover.  Equivalently, we are
% maximising expected utility, where the utility function is $U(Y)$,
% depending only on the outcome $Y$, and having $U(1)>U(0)$.  In some
% cases it might be appropriate to use a utility of the form $U(L,Y)$,
% with $U(l,1)>U(l,0)$, depending also on the individual's background
% characteristics $L$ (for example, if possibly unethically, we might
% value a young person's life more than an old person's).  When we know
% $L=l$, nothing changes: we still should treat just when $\cate(l)>0$.
% But if $L$ is unobserved, we have to compute and compare the
% unconditional expectations $\E\{U(Y,L) \mid X\setto x\}$, for $x=0$ or
% $1$.

% Now $\E\{U(Y,L) \mid X\setto x\} = \E\{V(Y,x) \mid X\setto x\}$, where
% $V(y,x) = \E\{U(y,L) \mid Y=y, X\setto x\}$.  So it is equivalent to
% use utility function $V(y,x)$, which will typically have non-trivial
% dependence on the decision $X\setto x$ taken, as well as on the
% outcome $y$.  Such a form for the utility might seem implausible of
% itself---for why should the value of an outcome vary, depending on
% which action brought it about?---but is seen to be appropriate when it
% arises by this route.

% \todo{Example needed here}

\subsection{Unit selection}
\label{sec:unitsel}
Consider now a large collection $G$ of patients $i=1,\ldots,N$, with
individual recovery probabilities
$p_i = \Pr(Y_i = 1 \mid L_i = l_i, X_i \setto 1)$,
$q_i = \Pr(Y_i = 1 \mid L_i = l_i, X_i \setto 0)$.  If we treat just
those in a subset $S$, the expected, and asymptotically actual,
proportion recovering will be
$N^{-1}\{\sum_{i\in S} p_i + \sum_{i \in G \setminus S} q_i\} =
N^{-1}\{\sum_{i\in G} q_i + \sum_{i\in S} \cate_i\}$.  Consequently,
to maximise this proportion, we should choose $S$, subject to any
constraints, to maximise $\sum_{i\in S} \cate_i$.  If we have limited
treatments available, we should thus prioritise individuals in
decreasing order of their \cate\ (while of course not treating any one
for whom $\cate<0$.)  Any other policy subject to the same constraints
will have a smaller number of recoveries.

\subsection{Missing information}
\label{sec:missinf}
It may happen that, while we have full knowledge of the distribution
of $(Y,L)$ conditional on each intervention $X\setto x$ ($x=1,0$), the
value of $L$ for the target patient is not available.  In that case we
can not proceed conditionally on $L$, and have no option but to base
the management of the patient on the unconditional\footnote{more
  generally, conditional on whatever limited background information
  {\em is\/} available} probabilities $\Pr(Y=1 \mid X \setto x)$.
Nothing is gained by trying to impute the unknown value of $L$ for
this patient.  If this is not obvious (as it should be), suppose we
tried to do so.  The recovery probabilities, conditional on a
hypothesised value $l$ for $L$, are $\Pr(Y=1 \mid L=l, X \setto x)$
($x=1,0$).  But as we do not know $l$, we need to take the expectation
of $\Pr(Y=1 \mid L, X \setto x)$ over the known distribution of $L$
when setting $X\setto x$ (which by \eqref{l} does not depend on $x$).
But this is just the unconditional recovery probability
$\Pr(Y=1 \mid X \setto x)$.  Wishful thinking is delusional: we can
only ever work with what we do in fact know---not with what we would
like to have known, if only circumstances had been different.

\subsubsection{Relation with MP}
\label{sec:rel}
To relate the above to the MP framework, with its assumed potential
responses, suppose we were to take, as our missing covariate $L$, the
individual treatment effect $\ite$ (which in this approach is supposed
to exist and to be determined prior to actual treatment).  If we were
(ideally, though never actually) able to observe $\ite$ before
deciding how to treat a patient, we would know exactly what to do:
treat just when $\ite=1$.  However, as discussed above, in the light
of the necessarily absent knowledge of $L=\ite$, the mere existence of
this ideal situation can have no bearing on the actual decision
problem at hand.  Nothing would be gained by trying to impute $\ite$:
we must again simply focus on $\cate = p-q$.  Consideration of
potential responses (even if regarded as meaningful) does not add any
value to the decision-theoretic approach.

% \footnote{With finite data, on taking account of known structure in
%   the interventional distributions of $(L,Y)$ it may be possible to
%   improve the estimation of $\Pr(Y=1 \mid X \setto x)$.  But this
%   still remains what we need to focus on.}

%%% Local Variables:
%%% mode: latex
%%% TeX-master: "main2"
%%% End:

\section{Analysis}
\label{sec:analysis}
We now provide a more detailed description and deconstruction of the
analysis of MP~\citep{mueller/pearl:intro}---which should not,
however, be taken as agreement with their arguments and
interpretations.  There are a number of crucial assumptions required,
but to avoid cluttering the argument we postpone specification and
discussion of these to \secref{prag}.

We develop the storyline in a number of stages.

In \secref{simple} we consider the case where we have access to
extensive experimental data on treatment $X$ and response $Y$, and
show how this can be used to bound the probabilities of benefit and of
harm.  We also discuss the special circumstances in which these
interval bounds shrink to a point.

In \secref{cov} we further suppose that we can measure covariate
information $L$ on individuals.  If we have this additional
information for the experimental data, this can lead to a narrowing of
the bounds for $\pb$ and $\ph$ for the target case---even when the
value of $L$ for that case is unobserved.

\Secref{itt} introduces a particular, potentially useful, covariate,
the ``preferred treatment'', $X^*$---the treatment that a patient (or
their doctor) would ideally choose, if unconstrained.  This may well
be informative about their state of health, and thus their outcome.
In some experiments it may be possible to observe $X^*$, and this can
then be used directly as $L$ in \secref{cov}.  However it will often
not be possible to observe $X^*$ in the experiment.  \Secref{comb}
considers how this problem can be overcome by the incorporation of
observational data, if we can assume that, in such data, the preferred
treatment was the one actually applied, so that $X^* = X$ becomes
observable.  The combination of experimental and observational data
then identifies the distribution of $X^*$ (together with the other
variables), so allowing us to apply the theory of \secref{cov} to
obtain improved bounds for $\pb$ and $\ph$, as detailed in
\secref{app}.

\subsection{Simplest case}
\label{sec:simple}
We start by presenting the basis of the approach in the simplest case,
where the data are purely experimental, and there is no additional
covariate information.  We then have access just to the interventional
response probabilities forming the basis of the decision-theoretic
analysis, $p=\Pr(Y=1 \mid X\leftarrow 1) = \Pr(Y(1)=1)$,
$q=\Pr(Y=1 \mid X\leftarrow 0) = \Pr(Y(0)=1)$.  But what can be
inferred, from these, about the probabilities of benefit and of harm?

As described by \citet{apd/mm:sef}, it is helpful to re-express the
interventional probabilities in terms of parameters $\tau$ and $\rho$,
where
\begin{eqnarray}
\label{eq:tau}  
\tau &\defeq& p-q\\
\label{eq:rho}  
  \rho &\defeq& p+q-1.  
\end{eqnarray}
Then $\tau$ is the {\em average treatment effect\/}, \ate, of $X$ on
$Y$, while $\rho$ is a measure of how common the outcome is.  Each of
$\tau$, $\rho$ can take values in $(-1,1)$.

The transition matrix from $X$ to $Y$, with entries
$\Pr(Y=y \mid X\setto x)$, is
\begin{equation}
  \label{eq:P}
  P =
  \left(
    \begin{array}[c]{cc}
      p &1-p\\
      q & 1-q
    \end{array}
  \right)
  =
  \left(
    \begin{array}[c]{cc}
      \half(1+\tau+\rho) & \half(1-\tau-\rho)\\
      \half(1-\tau+\rho) & \half(1+\tau-\rho)
    \end{array}
  \right) =:  P(\tau,\rho),
\end{equation}
where the row ($x$) and column ($y$) labels are $1$ and $0$ in that
order.  The necessary and sufficient condition for all the entries to
be non-negative is
\begin{equation}
  \label{eq:rhotau}
  |\tau| + |\rho| \leq 1.
\end{equation}
We have equality in \eqref{rhotau} only in the {\em
  degenerate\/} case that one of the entries of $P$ is 0.

% If, as we shall henceforth assume, $\tau>0$, this can only be the
% lower left entry of $P$, so that $\Pr(Y=1 \mid X\setto 1) = 1$, or
% the upper right entry, so that $\Pr(Y=0 \mid X\setto 0) = 1$.

We can express the joint distribution for $\bY=(Y(1),Y(0))$ as in
\tabref{r0r1}.
\begin{table}[h]
  \centering
  \begin{tabular}[c]{c|cc|c} 
    \multicolumn{1}{c}{} & {$Y(0)=1$} & \multicolumn{1}{c}{$Y(0)=0$}\\\hline
    $Y(1) = 1$  & $\half(1+\rho-\xi)$  & $\half(\xi+\tau)$ & $\half(1+\tau+\rho)$\\
    $Y(1) = 0$  & $\half (\xi-\tau)$ & $\half(1-\rho - \xi)$  & $\half(1-\tau-\rho)$\\
    \hline
                             & $\half(1-\tau+\rho)$  &  $\half(1+\tau-\rho)$& $1$\\
  \end{tabular}
  \caption{Joint probability distribution of $(Y(1),Y(0)$}   
  \label{tab:r0r1}
\end{table}
The margins are determined by \eqref{interp} (with $L$ absent),
\eqref{tau}, \eqref{rho} and \eqref{P}; but the internal entries are
indeterminate, having one degree of freedom crystallised in the
unspecified ``slack variable'' $\xi$, which, on account of the
fundamental problem of causal inference, is not identified by the
experimental data.  The only constraint on $\xi$ is the logical one
that all internal entries of \tabref{r0r1} be non-negative. This holds
if and only if
\begin{equation}
  \label{eq:basicineq}
  |\tau| \leq \xi \leq  1-|\rho|.
\end{equation}
This interval information is all that can be concluded about the joint
distribution for $\bY$ when we have data on the behaviour of $Y$ under
intervention on $X$, and no additional information.  

\begin{remark}
  \label{rem:pointsimp}
  The interval \eqref{basicineq} shrinks to a point, so that the joint
  distribution of $\bY$ is fully determined by the experimental data,
  if and only if we have equality in \eqref{rhotau}, \ie, just when
  $P$ is degenerate, so that, for some $x,y =0,1$,
  $\Pr(Y=y \mid X \setto x) =0$.  That is to say, for at least one of
  the interventions, the resulting outcome $Y$ can be predicted with
  certainty---a most unusual state of affairs.  In this case
  $\Pr(Y(x) = y) = 0$, so that both joint events
  $(Y(x) = y, Y({\overline x})=0)$ and $(Y(x) = y, Y(\overline x)=1)$
  (where $\overline x = 1-x)$ have probability $0$.
\end{remark}

\subsubsection{Benefit and harm}
\label{sec:benharm1}
The probability of benefit $\pb$ is the upper right entry of
\tabref{r0r1}, $\pb = \Pr(Y(1)=1,Y(0)=0) = \half(\xi+\tau)$, which by
\eqref{basicineq} is bounded between
\begin{equation}
  \label{eq:pb-}
  \pb_- := \half(|\tau|+\tau) = \max\{\tau,0\}
\end{equation}
and
\begin{equation}
  \label{eq:pb+}
  \pb_+:=\half(1-|\rho|+\tau) = \min\{\Pr(Y=1 \mid X \setto 1), \Pr(Y=0
  \mid X \setto 0)\}.
\end{equation}
The probability of harm, given by the lower left entry of
\tabref{r0r1}, is then
$\ph=\Pr(Y(1)=0,Y(0)=1) = \half(\xi-\tau) = \pb-\tau$.

For the case of \exref{simple}, we have $\tau= 0.28$, $\rho= -0.3$.
Without any further information, we can only infer
$0.28 \leq \pb \leq 0.49$, and correspondingly $0 \leq \ph \leq 0.21$.

\subsubsection{Biased data}
\label{sec:rest}
In some cases it may be that we have access to the value of $\tau$,
but not $\rho$: for example, we may consider that, because of
selection effects, the data supply biased estimates for $p$ and $q$,
but nevertheless the estimate of their difference, $\tau$, is
essentially unbiased and can be regarded as relevant to the target
patient.  In that case, applying \eqref{pb-} and \eqref{pb+} knowing
only that $0\leq |\rho|\leq 1 - |\tau|$, we can bound $\pb$ between
$\max\{\tau,0\}$ and $\half(1+\tau)$.\footnote{This situation is
  analysed in \cite{mueller/pearl:rct}.  However, the upper bound is
  given there as $\min\{1, 1+\tau\}$, which is worse than ours.}  For
\exref{simple}, we obtain $0.28 \leq \pb \leq 0.64$, and consequently
$0 \leq \ph \leq 0.36$.

\subsection{Covariate information}
\label{sec:cov}
Now suppose that, again with purely experimental data, we can obtain
additional information on some pre-treatment covariate information $L$
(for simplicity assumed discrete), where each term
$\Pr(L=l \mid X\leftarrow x) = \Pr(L=l)$ is known (and assumed
positive).  We thus have access to the conditional interventional
probabilities $\Pr(Y=y \mid L=l, X \leftarrow x)$.

% This might be because we can observe $S$ in interventional
% circumstances; or because we have observational data on $(X,Y,S)$ and
% can assume $S$ is a {\em sufficient covariate\/}, able to control for
% possible confounding, so that
% $\Pr(Y=y \mid Ss=s, X \leftarrow x) = \Pr(Y=y \mid L=l, X = x)$.  In
% this case, to ensure meaningful conditioning, we further need the {\em
%   positivity\/} condition:
% \begin{equation}
%   \label{eq:pos}
%   \Pr(L=l, X= x) >0\qquad\mbox{all $s$, $x=0$ or 1}.
% \end{equation}

Let $\tau(l), \rho(l)$ be defined as in \eqref{tau} and \eqref{rho},
but with probabilities further conditioned on $L=l$.  If, for the
target case, we observe $L=l$, then we simply apply the above
analysis, conditional on $L=l$.  In particular, the joint distribution
for $\bY$, given $L=l$, will be as in \tabref{r0r1}, with
$\rho,\tau,\xi$ replaced, respectively, by $\rho(l),\tau(l),\xi(l)$,
where $\xi(l)$ is subject only to
\begin{equation}
  \label{eq:sineq}
  |\tau(l)| \leq \xi(l) \leq  1-|\rho(l)|.
\end{equation}

Finally, suppose that, while still having access, from the
experimental data, to the probabilities
$\Pr(Y=y \mid L=l, X \leftarrow x)$, we do not observe $L$ for the
target patient.  In this case (and unlike the situation for decision
theory) this additional probabilistic knowledge can make a difference
to the MP inference.  In \tabref{r0r1} we now have
$\xi = \sum_s \xi(l)\times \Pr(L=l)$, and we get the new interval
bound
\begin{equation}
  \label{eq:impb}
  L := \sum_s |\tau(l)|\times \Pr(L=l) \leq \xi \leq 1-\sum_s |\rho(l)| \times \Pr(L=l)=:U.
\end{equation}

Since $\tau= \sum_s \tau(l)\times \Pr(L=l)$,
$\rho = 1-\sum_s \rho(l) \times \Pr(L=l)$, this interval will be
strictly contained in that of \eqref{basicineq}, so long as not all the
$(\tau(l))$, or not all the $(\rho(l))$, have the same sign.

The probability of benefit is now bounded below by
$\sum_l \pb_-(l)\Pr(L=l)$ and above by $\sum_l \pb_+(l)\Pr(L=l)$,
where $\pb_-(l)$ and $\pb_+(l)$ can be computed as in
\secref{benharm1} with $\tau$ and $\rho$ replaced by $\tau(l)$ and
$\rho(l)$, respectively.

\begin{remark}
  \label{rem:points}
  Applying \remref{pointsimp}, and noting $|\tau(l)|\leq 1-|\rho(l)|$,
  all $l$, we see that the interval \eqref{impb} will reduce to a
  point, yielding full identification of the joint distribution of
  $\bY$, if and only if $|\tau(l)| = 1-|\rho(l)|$, all $l$, so that,
  for each $l$, at least one of $\Pr(Y=y \mid L=l, X\setto x)$, for
  $x,y=0,1$, is zero.  In this case, both
  $\Pr(Y(x) = y, Y({\overline x})=0 \mid L=l)$ and
  $\Pr(Y(x) = y, Y(\overline x)=1\mid L=l)$ will be $0$.  Knowing the
  value $l$ of $L$ would then allow us to predict at least one of the
  interventional outcomes with certainty.  However, the relevant $x$
  and $y$ may vary with $l$, in which case such certainty would not be
  possible in the absence of knowledge of $L$.
\end{remark}

\subsubsection{Observational data}
\label{sec:obs}
Consider now the case that our data are observational, rather than
experimental.  Suppose we can observe a {\em sufficient covariate\/}:
a covariate $L$ such that after conditioning on $L$ there is no
residual confounding.  That is to say, the observational probability
$\Pr(Y =y \mid L=l, X = x)$ can be equated with the interventional
probability $\Pr(Y =y \mid L=l, X \setto x)$.  To ensure meaningful
conditioning, we further need the {\em positivity\/} condition: in the
observational setting,
\begin{equation}
  \label{eq:pos}
  \Pr(L=l, X= x) >0\qquad\mbox{all $l$, and $x=0$ or 1}.
\end{equation}
We can then proceed exactly as in \secref{cov} above.

\subsection{Preferred treatment}
\label{sec:itt}
Administration of a treatment to a patient can be usefully decomposed
into
two steps:\\

\begin{description}
\item[Preference] The patient, or their doctor, decides which
  treatment they would ideally want.  We introduce a binary stochastic
  ``preferred treatment''\footnote{In \cite{apd:found} this variable
    was termed ``intention to treat''.  We have avoided that usage
    here, to forestall confusion with the distinct use of that term in
    a clinical trial under non-compliance.} (PT) variable $X^*$ to
  denote this preferred
  treatment.\\
  
  Note that such a preference will typically be related to the
  patient's health status and other background information that could
  be predictive of recovery, so that we cannot regard those who
  desire, and those who reject, active treatment as comparable like
  with like.
  (This is the genesis of confounding.)\\
 
\item[Application] A treatment $X$ (which may or may not be the same
  as the preferred treatment $X^*$)\footnote{It is important to
    distinguish between $X$ and $X^*$.  We should ideally further
    distinguish between imposed treatment and received treatment, as
    in \citet{apd:found}.  Here we notate both as $X$, hoping this
    will cause no confusion.  We write $X\setto x$ when $X$ refers to
    the imposed treatment, and $X = x$ when $X$ refers to
    the received treatment.} is taken by the patient.\\
  
\end{description}

The PT variable $X^*$ is supposed to exist prior to application of
treatment, and can thus be regarded as independent of it:
\begin{equation}
  \Pr(X^* = x^* \mid X \leftarrow x) = \Pr(X^*=x^*).
  \label{eq:xint}
\end{equation}
This expresses the covariate nature of $X^*$.

We assume that, in an observational setting, the preferred treatment is
the one that is actually administered (there being no reason to do
otherwise).  Thus the received treatment $X$ will be the same as the
preferred treatment $X^*$.  In particular, since we observe $X$, we can
infer the value of $X^*$ in an observational context.

In an experiment, however, the treatment $X$ will be imposed (\eg, by
randomization), in a way that will typically take no account of $X^*$.
Even though we can still conceive of the PT variable $X^*$ as
existing, it may or may not be possible to observe it.  In the case
that $X^*$ is observed in the data---as in a patient preference trial
\citep{Torgerson:bmj})---it can be used, just like any other
covariate, to improve decision-making, as in \secref{DT} (when $X^*$
is observed for the target patient) \citep{stensrud2022optimal}; or, in
the approach of MP, to potentially narrow the bounds on $\pb$ and
$\ph$ even when $X^*$ is not observed for the
target, as in \secref{cov}.\\

\subsubsection{PT as a sufficient covariate}
\label{sec:ittsuff}
In an observational setting, where $X^*=X$ is observed, it is natural
to assume ``distributional consistency'' \citep{apd:found}: the
distribution of $Y$ given preferred treatment $X^*=x$---and so, also,
given received treatment $X=x$---is the same as that of $Y$, given
$X^*=x$, under an imposed intervention $X \leftarrow x$ that happens
to coincide with the treatment that would have been chosen anyway:
\begin{equation}
  \label{eq:distcons}
  \Pr(Y=y \mid X=x) = \Pr(Y=y \mid X^*=x,X = x) = \Pr(Y=y \mid X^*=x,X  \leftarrow x).
\end{equation}
For $x^*\neq x$, the event $(X^*=x^*,X = x)$ does not occur in the
observational regime, so we can interpret
$\Pr(Y=y \mid X^*=x^*,X = x)$ however we want, in particular as
\begin{equation}
  \label{eq:distcons2}
\Pr(Y=y \mid X^*=x^*,X = x) = \Pr(Y=y \mid X^*=x^*,X  \leftarrow x),
\end{equation}
and then \eqref{distcons} implies that \eqref{distcons2} holds for all
$x,x^*$.

Properties~\eqref{xint} and \eqref{distcons2} imply that $X^*$, which
is observed in the observational setting, behaves as a sufficient
covariate. 

\subsection{Combination of data}
\label{sec:comb}
It would be nice if, with observational data, we could profit from the
fact that $L = X^*$ is a sufficient covariate, as in \secref{obs}.
However, this is not straightforward, since the positivity condition
\eqref{pos} fails: for $x^*\neq x$, even though we may assume
$\Pr(Y=y \mid X^*=x^*, X\setto x) = \Pr(Y=y \mid X^*=x^*, X = x)$, we
have no data to estimate the latter term.  Again, when our data are
experimental but we can not directly observe $X^*$, we can not
identify $\Pr(Y=y \mid X^*=x^*, X\setto x)$.  However, it turns out
that we can do so if we can also obtain observational data: the
combination of both types of data allows us, after all, to identify
$\Pr(Y=y \mid X^*=x^*, X\setto x)$, even for $x\neq x^*$.  This we
show in the following theorem.

\begin{theorem}
\label{thm:dawid}
Suppose we can identify the joint distribution of $X$ and $Y$ in the
observational context, where $0<\Pr(X=1)<1$, and can also identify the
distribution of $Y$ under either intervention $X \leftarrow x$
$(x=0,1)$.  Then, under conditions \eqref{xint} and \eqref{distcons},
all the probabilities $\Pr(Y=y \mid X^*=x^*, X \leftarrow x)$
$(x,x^* = 0,1)$ are identified.  Specifically,
\begin{eqnarray}
  \label{eq:11}
  \Pr(Y=y \mid X^* = 1, X \leftarrow 1) &=& \Pr(Y=y \mid X=1)\\
  \label{eq:10}
  \Pr(Y=y \mid X^*=1, X \leftarrow 0) &=& \frac{\Pr(Y=y \mid  X \leftarrow 0) - \Pr(Y=y, X=0)}{\Pr(X=1)}\\
  \label{eq:00}
  \Pr(Y=y \mid X^* = 0, X \leftarrow 0) &=& \Pr(Y=y \mid X=0)\\
  \label{eq:01}
  \Pr(Y=y \mid X^*=0, X \leftarrow 1) &=& \frac{\Pr(Y=y \mid  X \leftarrow 1) - \Pr(Y=y, X=1)}{\Pr(X=0)}.
\end{eqnarray}
\end{theorem}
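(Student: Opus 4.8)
The plan is to treat the two ``diagonal'' identities \eqref{11} and \eqref{00} separately from the two ``off-diagonal'' identities \eqref{10} and \eqref{01}, since the former follow almost immediately while the latter are where the combination of the two data sources does real work. Throughout I would exploit the single structural fact that in the observational regime the administered treatment coincides with the intended one, so that $X^*=X$ and, in particular, $\Pr(X^*=x)=\Pr(X=x)$.

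First I would dispose of \eqref{11} and \eqref{00}. Because $X^*=X$ observationally, the event $(X^*=x,X=x)$ is just $(X=x)$, whence $\Pr(Y=y\mid X^*=x,X=x)=\Pr(Y=y\mid X=x)$ is read directly off the observational joint distribution. Distributional consistency \eqref{distcons} then equates this with $\Pr(Y=y\mid X^*=x,X\leftarrow x)$, giving \eqref{11} and \eqref{00} for $x=1$ and $x=0$ respectively.

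The substance is in \eqref{10} and \eqref{01}. For \eqref{10} I would expand the experimental marginal by the law of total probability over the two possible values of $X^*$ under the intervention $X\leftarrow 0$:
\[
\Pr(Y=y\mid X\leftarrow 0)=\sum_{x^*}\Pr(Y=y\mid X^*=x^*,X\leftarrow 0)\,\Pr(X^*=x^*\mid X\leftarrow 0).
\]
Here \eqref{xint} replaces $\Pr(X^*=x^*\mid X\leftarrow 0)$ by the marginal $\Pr(X^*=x^*)=\Pr(X=x^*)$. The $x^*=0$ summand is identified by \eqref{00} as $\Pr(Y=y\mid X=0)\Pr(X=0)=\Pr(Y=y,X=0)$, leaving a single linear equation whose only unknown is $\Pr(Y=y\mid X^*=1,X\leftarrow 0)$, carrying coefficient $\Pr(X^*=1)=\Pr(X=1)$. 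Solving, and invoking $\Pr(X=1)>0$ to permit the division, yields \eqref{10}; the argument for \eqref{01} is the mirror image, using $\Pr(X=0)>0$.

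The only delicate point is the off-diagonal conditional $\Pr(Y=y\mid X^*=1,X=0)$ (and its mirror): this is an event that never materialises in the observational regime, so it is not directly estimable, and it is precisely the extension \eqref{distcons2} of consistency to the non-occurring cell that licenses writing the total-probability expansion in terms of the interventional quantity we wish to identify. Once that interpretive step is granted, the experimental marginal supplies exactly the one extra equation needed to pin the off-diagonal interventional probability down, so it is the combination of experimental and observational data---neither alone sufficing---that closes the system. I expect no computational obstacle beyond this; the essential care is simply in arranging the decomposition so that every term other than the target is already identified.
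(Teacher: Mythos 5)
Your proof is correct and follows essentially the same route as the paper's: \eqref{11} and \eqref{00} from distributional consistency plus $X^*=X$ observationally, then a law-of-total-probability expansion of the experimental marginal over $X^*$, using \eqref{xint} and the already-identified diagonal term, and dividing by $\Pr(X=x)>0$. One minor quibble: the total-probability expansion lives entirely in the interventional regime, so \eqref{distcons2} is not actually needed to license it---consistency is invoked only to identify the diagonal summand---but this does not affect the validity of your argument.
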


\begin{proof}
  \eqref{11}  and \eqref{00} follow from \eqref{distcons}.
  
  To identify $\Pr(Y=y \mid X^* = 1, X \leftarrow 0)$, we argue as
  follows.  We have
\begin{eqnarray}
\nonumber  \Pr(Y=y \mid X \leftarrow 0) &=& \Pr(Y=y \mid X^*=0,  X\leftarrow 0)\times\Pr(X^*=0 \mid  X \leftarrow 0)\\
\nonumber                               &&{}+\Pr(Y=y \mid X^*=1,  X\leftarrow 0)\times\Pr(X^*=1 \mid  X \leftarrow 0)\\
\nonumber                               &=& \Pr(Y=y \mid X=0)\times\Pr(X=0)\\
  \label{eq:follow}                               &&{}+\Pr(Y=y \mid X^*=1,  X\leftarrow 0)\times\Pr(X=1),
\end{eqnarray}
on using \eqref{xint} and \eqref{distcons}, and the fact that $X^*=X$
in the observational setting.  Since all the other terms in
\eqref{follow} are identifiable in either the observational or the
experimental context, and $\Pr(X=1)\neq 0$, we can solve for
$\Pr(Y=y \mid X^*=1, X\leftarrow 0)$, obtaining \eqref{10}.  Then
\eqref{01} follows similarly.
\end{proof}
The above proof relies on $X$ (and so $X^*$) being binary, but $Y$
need not be.  Versions of this argument have appeared in
\cite{forcinacomment,shpitser/pearl:ett,sgg/apd:ett,apd/mm:sef,stensrud2022optimal}.

\begin{cor}
\label{cor:joint}
The joint distribution of $(X^*, Y)$ under an intervention
$X \leftarrow x$ is then identified.
\end{cor}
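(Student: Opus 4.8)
The plan is to exploit the factorisation of the joint interventional distribution into a marginal for $X^*$ and a conditional for $Y$ given $X^*$:
\begin{equation*}
\Pr(X^*=x^*, Y=y \mid X \leftarrow x) = \Pr(X^*=x^* \mid X \leftarrow x)\times \Pr(Y=y \mid X^*=x^*, X \leftarrow x).
\end{equation*}
The second factor on the right is, for every pair $(x,x^*)$, precisely one of the four quantities already identified in \thmref{dawid} via \eqref{11}--\eqref{01}, so nothing further is required there; the whole content of the corollary lies in handling the first factor and then multiplying.

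First I would treat the marginal factor. By the covariate property \eqref{xint}, $X^*$ is independent of the intervention, so $\Pr(X^*=x^* \mid X \leftarrow x) = \Pr(X^*=x^*)$, not depending on the imposed value $x$. It then remains to argue that this unconditional distribution of $X^*$ is itself identified, and this is where the observational data enter: since in the observational regime the intended and administered treatments coincide, $X^* = X$, the marginal law of $X^*$ equals the observational marginal $\Pr(X=x^*)$, which is identified by hypothesis (recall $0 < \Pr(X=1) < 1$ is assumed). Thus both factors on the right-hand side are available.

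Combining them then yields the joint interventional distribution of $(X^*,Y)$ in closed form for each $x$. I do not expect any genuine obstacle: the corollary is essentially a repackaging of \thmref{dawid}. The only point demanding care --- and the one I would state explicitly --- is the identification of the marginal of $X^*$, which is not supplied directly but follows from combining the independence assumption \eqref{xint} with the observational identity $X^*=X$. Once that is in hand, the result is immediate by multiplication of the two identified factors.
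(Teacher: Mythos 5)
Your argument is correct and is essentially the paper's own proof: the paper likewise factorises the joint law, invokes \thmref{dawid} for the conditional factor, and identifies the marginal of $X^*$ via \eqref{xint} together with the observational identity $X^*=X$, giving $\Pr(X^*=x^*\mid X\leftarrow x)=\Pr(X=x^*)$. You have merely spelled out the multiplication step that the paper leaves implicit.
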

\begin{proof}
  Follows since, by \eqref{xint},
  $\Pr(X^* = x^* \mid X \leftarrow x) = \Pr(X=x^*)$ is identified in
  the observational context.
\end{proof}

\begin{remark}
  \label{rem:constraint}
  Since $\Pr(Y=y \mid X^*=1, X\leftarrow 0)\geq 0$, \etc, we deduce
  from \eqref{10} and \eqref{01} the consistency constraint
  $\Pr(Y=y \mid X\leftarrow x) \geq \Pr(Y=y, X=x)$, all $x, y$.  When
  this fails, and that failure can not be ascribed to sampling
  variation or bias, that is evidence of violation of the conditions
  of \secref{prag} below, that have, implicitly, been used to justify
  the above argument.
\end{remark}

\thmref{dawid} and \corref{joint} clarify just what it is that the
combination of observational and experimental data is doing for us: it
allows us to identify distributions involving the PT variable $X^*$.

\subsection{Combination of data: Benefit and harm}
\label{sec:app}
Taking now $X^*$ as our sufficient covariate $L$, on combining
experimental and observational data we can apply the formulae of
\eqref{11}--\eqref{01} to compute the quantities $\tau(x^*)$,
$\rho(x^*)$ required for the analysis of \secref{cov}.  Noting that
$X^*=X$ in the observational regime, so that $\Pr(X=x)=\Pr(X^*=x)$, we
obtain
\begin{eqnarray*}
  \label{eq:tau1}
  {\Pr(X^*=1)} \times \tau(1) &=& {\Pr(Y=1) - \Pr(Y=1 \mid X \setto 0)}\\
  {\Pr(X^*=0)}\times\tau(0) &=& {\Pr(Y=1 \mid X \setto 1)-\Pr(Y=1)}\\
  {\Pr(X^*=1)} \times\rho(1) &=& K - \Pr(Y=0 \mid X \setto 0)\\ 
  {\Pr(X^*=0)}\times\rho(0) &=& \Pr(Y=1 \mid X \setto 1)-K
\end{eqnarray*}
where $K= \Pr(Y=1, X=1) + \Pr(Y=0, X=0)$.  Then from \eqref{impb} we
bound $\xi$ within $(L,U)$, where
\begin{eqnarray*}
  L &=& |\Pr(Y=1) - \Pr(Y=1 \mid X \setto 0)| + |\Pr(Y=1) - \Pr(Y=1 \mid X \setto 1)|\\
1-U &=& |\Pr(Y=0 \mid X \setto 0) - K| + |\Pr(Y=1 \mid X \setto 1) - K|.
\end{eqnarray*}
Then $\pb$ lies in $(\half(L+\tau), \half(U+\tau))$, and
$\ph = \pb-\tau$ lies in $(\half(L-\tau), \half(U-\tau))$.  Although
expressed differently, these results agree with those of MP.

By \remref{points}, the joint distribution of $\bY$, and in particular
$\pb$, $\ph$, will be point identified just when, for both $x^* = 0$
and $x^* = 1$, there exist $x$, $y$ such that
$\Pr(Y=y \mid X^* = x^*, X \setto x)= 0$.  In non-trivial cases we
will have $\Pr(Y = y \mid X=x) \neq 0$, in which case, by \eqref{11}
and \eqref{00}, this would need to happen with $x \neq x^*$.  For
that, by \eqref{10} and \eqref{01}, we require
\begin{equation}
  \label{eq:a10}
  \Pr(Y=y \mid X\setto 0) = \Pr(Y=y,  X=0) 
\end{equation}
for either $y=1$ or $y=0$; as well as
\begin{equation}
  \label{eq:a01}
  \Pr(Y=y \mid X\setto 1) = \Pr(Y=y,  X=1) 
\end{equation}
for either $y=1$ or $y=0$.

\subsection{Combination of data: DT approach}
\label{sec:combdt}
As discussed in \secref{missinf}, in the decision-theoretic approach,
when covariate data are missing there is no point in trying to impute
them.  The only ``value added'' by combining experimental and
observational data is knowledge of distributions involving the PT
covariate $X^*$.  But, in the absence of knowledge of the actual value
of $X^*$, this purely distributional knowledge has no bearing on the
decision problem.  Consequently, for the DT analysis, there is nothing
to be gained by bringing observational data into the picture.

%%% Local Variables:
%%% mode: latex
%%% TeX-master: "main2"
%%% End:

\section{Comments on the approach}
\label{sec:comments}
% \todo{More work needed on this section}

Here we collect some critical comments on the MP programme, with its
basis in potential responses.  These are arranged along several
dimensions.

\subsection{Necessity?}
\label{sec:nec}
The potential response framework, introduced by \citet{jn:ss} and
reintroduced and popularised by \citet{dbr:jep}, is often taken to be
fundamental to the conduct of causal inference.  However this
assumption has been refuted by \citet{apd:cinfer,apd:found}, who by
detailed analysis has shown that, so far from being fundamental,
potential responses are unnecessary and unhelpful.  A fully
satisfactory and straightforward theory of statistical causality,
ideally suited for making statistical inferences about the ``effects
of causes'' \citep{apd/mm:eoccoe}, as is relevant here, can be
constructed using only standard decision-theory, eschewing any
additional ingredients such as potential responses.

\subsection{Philosophy}
\label{sec:phil}
There are also serious philosophical objections to regarding potential
responses as having real existence.  Only if we take a fully Laplacean
view of the universe, in which the future of the universe is entirely
determined by its present state and the laws of Physics, does this
make any sense at all---and even then, it is difficult to incorporate
the whims of an unconstrained external agent who decides whether or
not to give treatment, or to account for the effect of external
conditions arising after treatment.

Even under Laplacean determinism, our ignorance of the information
needed to predict the future exactly means that we are unable to make
use of it.  Whether or not we believe in a deep-down deterministic
universe, our predictions of the future can only be based on the
limited information we do have at our disposal, and must necessarily
be probabilistic.\footnote{See \citet{apd:statsci04} for an
  approach to understanding non-extreme probabilities based on
  imperfect information about a deterministic world.} Imagining what
we could know or do, if only we had more information than we actually
do have, gets us nowhere.

\subsection{Applicability}
\label{sec:applic}
Another important dimension of criticism is that the strong conditions
needed for application of the MP theory will almost never obtain in
practice.  See \secref{prag} below for details.

\subsection{Helpfulness}
\label{sec:help}
The output of an MP analysis will, at the very best (and then only
rarely), be point estimates of the probabilities of benefit and of
harm.  In typical cases we can only bound these quantities within an
interval.  But even when we know these quantities, it is far from
clear how they should be used directly to inform treatment decisions.

In the examples of \citet[Section 3]{mueller/pearl:intro} (see also
\secref{ex} below), the experimental data do not distinguish between
males and females, but the incorporation of observational data shows
up differences in their estimates for \pb\ and \ph.  MP suggest this
difference indicates that there should be further variables that could
be measured that would lead to different decision strategies for the
two sexes, but give no guidance for identifying such variables.  The
general point is valid, and indeed such guidance is available: see
\secref{super} below.  But it is considerations other than \pb\ and
\ph\ that are relevant there.

\citet{MP:aje} consider the (unusual) situation that ``a post-mortem
autopsy can identify the cause of death and that families of patients
who die due to the treatment are likely to sue the hospital for
negligence.''  That is to say, at post-mortem new information becomes
available that shows that the patient would have lived if untreated,
and so has suffered harm.  This is proferred as an argument for paying
attention to benefit and harm.  But the defence against such a law
suit would be that this information could not possibly have been known
or taken into account at the point of treatment, so there can be no
culpability.  This is yet another example of the fact there is nothing
to be gained by attempting to impute unobserved quantities.

\subsection{Ethics}
\label{sec:ethics}
Our final criticism is the simplest, but most incisive.  The treatment
decisions made using the DT approach are guaranteed to be better than
those made by any other decision rule, in the sense that they will
maximise the number of recoveries in the population.  So whenever some
other approach, such as that of MP, leads to different decisions, it
will produce a decrease in the number of recoveries.  We find it hard
to construe this as ethical.

%%% Local Variables:
%%% mode: latex
%%% TeX-master: "main2"
%%% End:

% \subsection{Principal stratum probabilities}
% \label{sec:princstrat}
% \input{princstrat}

\section{Examples}
\label{sec:ex}
\citet[Table~1]{mueller/pearl:intro} consider two cases, in both of
which the interventional probabilities of recovery are as in our
\exref{simple}, having $\Pr(Y=1 \mid X\setto 1) = 0.49$,
$\Pr(Y=1 \mid X\setto 0) = 0.21$, and so $\ate = 0.28$.  However, they
have different observational data.  We now analyse these in detail.

\begin{example}
  \label{ex:f}
  This example relates to females, for whom the observational joint
  probabilities are as in \tabref{f}.
\begin{table}[h]
  \centering
  \begin{tabular}[c]{c|cc|c} 
    \multicolumn{1}{c}{} & {$Y=1$} & \multicolumn{1}{c}{$Y=0$}\\\hline
    $X =1$ & $0.19$ & $0.51$ & $0.70$\\
    $X = 0$ & $0.21$ & $0.09$ & $0.30$\\\hline
                         &  $0.40$ & $0.60$ & $1$\\
  \end{tabular}
  \caption{Joint observational distribution of $(X,Y)$ for females}   
  \label{tab:f}
\end{table}

Applying the formulae of \secref{app} we find:
\begin{eqnarray*}
  0.7 \times\tau(1) &=& 0.19\\
  0.3 \times\tau(0) &=& 0.09\\
  0.7 \times\rho(1) &=& -0.51\\
  0.3 \times\rho(0) &=& 0.11
\end{eqnarray*}
It follows that $\pb_-(1) = \tau(1) = 19/70$.  Also,
$\pb_+(1) = \Pr(Y=1 \mid X^* =1, X\setto 1) = \Pr(Y=1 \mid X=1) =
19/70$.  Hence, given $X^* = 1$, we have exact identification:
$\pb(1) = 19/70$.  This occurs because
$\Pr(Y=1, X=0) = 0.21 = \Pr(Y=1 \mid X\setto 0)$, implying the
deterministic property $\Pr(Y=1 \mid X^* = 1, X \setto 0) = 0$: a
female who desires treatment will never recover if untreated.
Consequently such a female should be treated.

Also, $\pb_-(0) = \tau(0) = 0.3$, while
$\pb_+(0) = \Pr(Y=0 \mid X^* =0, X\setto 0) = \Pr(Y=0 \mid X=0) =
0.3$.  Given $X^* = 0$, we again have exact identification:
$\pb_-(0) = 0.3$.  This occurs because
$\Pr(Y=0, X =1) = 0.51 = \Pr(Y=0 \mid X \setto 1)$, so that
$\Pr(Y=0 \mid X^*=0, X\setto 1) = 0$: a female who does not desire
treatment will always recover if treated.  Again, such a female should
be treated.

Finally we obtain exact identification marginally: $\pb= 0.28$.
Correspondingly, $\ph = \pb-\tau = 0$.  As there is no possibility of
harm, any female should be treated.

All the above conclusions agree with the DT prescription, based on the
experimental data alone: since $\ate > 0$, a female should be treated.
\end{example}

\begin{example}
  \label{ex:m}
  For males, the observational joint probabilities are as in
  \tabref{m}.
\begin{table}[h]
  \centering
  \begin{tabular}[c]{c|cc|c} 
    \multicolumn{1}{c}{} & {$Y=1$} & \multicolumn{1}{c}{$Y=0$}\\\hline
    $X =1$ & $0.49$ & $0.21$ & $0.70$\\
    $X = 0$ & $0.21$ & $0.09$ & $0.30$\\\hline
                         &  $0.70$ & $0.30$ & $1$\\
  \end{tabular}
  \caption{Joint observational distribution of $(X,Y)$ for males}   
  \label{tab:m}
\end{table}
Proceeding similarly to \exref{f}, we find that $\pb$ and $\ph$ are
again identified exactly: $\pb = 0.49$, $\ph = 0.21$.  Indeed,
$\Pr(Y=1, X=1) = 0.49 = \Pr(Y=1 \mid X\setto 1)$, implying the
deterministic property $\Pr(Y=1 \mid X^* = 0, X \setto 1) = 0$: a male
who does not desire treatment will never recover if treated.
Consequently such a male should not be treated.  Also,
$\Pr(Y=1, X=0)= 0.21 = \Pr(Y=1 \mid X \setto 0)$, so that
$\Pr(Y=1 \mid X^* = 1, X \setto 0) = 0$: a male who desires treatment
will never recover if untreated, so that such a male should be
treated.

However, if we do not observe $X^*$ for the target male patient, the
above does not tell us how to proceed.  We might try to balance $\ph$
($= 0.49$) and $\pb$ ($=0.21$) somehow: for example, treat just when
$\pb > \lambda \ph$ for some chosen value of $\lambda$.  In the light
of the clinical maxim {\em primum non nocere\/}, a value $\lambda = 3$
might be chosen---in which case the target male would not be
treated.\footnote{This argument parallels one in
  \citet{mueller/pearl:blog}, having different numbers, and
  $\lambda = 2$.}

By contrast, in the absence of knowledge of $X^*$ for the target male,
the DT approach would take no account of the observational data, again
focusing simply on $\ate = 0.28$---and so decide to treat.  In a large
population of similar cases, this would lead to an overall recovery
rate of $49\%$, the maximum possible; whereas the above strategy based
on balancing $\pb$ and $\ph$ would only have a $21\%$ recovery rate.
It is difficult to see how this could be regarded as ethical.
\end{example}

\begin{example}
  \label{ex:x}
  MP further consider another case.  Again, the interventional
  probabilities are $\Pr(Y=1 \mid X\setto 1) = 0.49$,
  $\Pr(Y=1 \mid X\setto 0) = 0.21$, with $\tau = 0.28$.  Now the
  observational joint probabilities are as in \tabref{x}.
\begin{table}[h]
  \centering
  \begin{tabular}[c]{c|cc|c} 
    \multicolumn{1}{c}{} & {$Y=1$} & \multicolumn{1}{c}{$Y=0$}\\\hline
    $X =1$ & $0.2$ & $0.5$ & $0.7$\\
    $X = 0$ & $0.1$ & $0.2$ & $0.3$\\\hline
                         &  $0.3$ & $0.7$ & $1$\\
  \end{tabular}
  \caption{Another joint observational distribution of $(X,Y)$}   
  \label{tab:x}
\end{table}
We compute
\begin{eqnarray*}
  0.7 \times \tau(1) &=& 0.09\\
  0.3 \times \tau(0) &=& 0.19\\
  0.7 \times \rho(1) &=& -0.39\\
  0.3 \times \rho(0) &=& 0.09.
\end{eqnarray*}
Using \eqref{impb} we find $0.28 \leq \xi \leq 0.52$, whence
$\pb = \half(\xi+\tau)$ is bounded between $0.28$ and $0.40$---and so
$\ph = \pb-\tau$ lies between $0$ and $0.12$.  So, even with the aid
of the additional observational data, we have not been able to
identify these probabilities exactly.  And even if we were to resolve
the ambiguity somehow, for example by taking the midpoints of these
intervals as suggested by \citet{li/pearl:ijcai19}, we
would be no better off than we were in \exref{m}, where trying to
balance $\pb$ against $\ph$ could lead to a decision opposite to the
rcommendation of the simple DT analysis, so leading to fewer
recoveries.
\end{example}

%%% Local Variables:
%%% mode: latex
%%% TeX-master: "main2"
%%% End:

\section{Another approach}
\label{sec:super}
As seen in \secref{comb}, the real payoff from combining experimental and observational data is access to distributions involving the PT variable $X^*$, which is a sufficient covariate\footnote{In the case that $X^*$ has been observed in the experimental data, these distributions will be directly estimable, and observational data are not required.}.  In particular, we learn the conditional interventional response probabilities $\Pr(Y=1 \mid X^* =x^*, X\leftarrow x)$\footnote{If there are additional observed covariates, these should be conditioned on also.}, for all $(x,x^*)$.  These are exactly the relevant quantities for informing decision-making when $X^*$ is observed for the target patient.  This suggests a more straightforward approach to take all the background (experimental and observational) data into account: ask the patient (or their doctor) their preferred treatment, and use this information to inform, and so improve, the decision strategy.  This is the ``super-optimal'' regime of \citet{stensrud2022optimal}, who show that this straightforward DT-based approach will always be as good as a strategy based on considerations of benefit and harm.

In \exref{f} and \exref{m}, taking $X^*$ into account yields a perfect prediction of the response under one of the treatments, and the optimal decision then becomes obvious.

For \exref{x}, using \eqref{11}--\eqref{10} we obtain \begin{eqnarray*}
  \Pr(Y=1 \mid X^*=1, X\leftarrow 1) &=& 20/70\\
  \Pr(Y=1 \mid X^*=1, X\leftarrow 0) &=& 11/70\\
  \Pr(Y=1 \mid X^*=0, X\leftarrow 1) &=& 29/30\\
  \Pr(Y=1 \mid X^*=0, X\leftarrow 0) &=& 10/30.
\end{eqnarray*}
Consequently, whichever value of $X^*$ is observed, the optimal decision (maximising the conditional probability of recovery) is $X\leftarrow 1$, ``treat''.  In this case knowledge of $X^*$ has not changed the optimal decision, but in others cases it could.

%%% Local Variables:
%%% mode: latex
%%% TeX-master: "main2"
%%% End:

% \section{A summary of Pearl's argument}
% \label{sec:pearl}
% \input{pearl_argument}

% \section{Reformulation of the argument}
% (without necessarily agreeing with it!)
% \label{sec:reform}

% \subsection{Combining observational and experimental data}
% \label{sec:obsexp}
% \input{obsexp}

% \section{Utility functions}
% \label{sec:utility}
% \input{LossFunctions}

% \section{Examples from \citet{li/pearl:ijcai19}}
% \input{examplesLiPearl}

\section{Assumptions and critical comments}
\label{sec:prag}
Here we identify and discuss some of the assumptions underlying the
foregoing analyses.

\subsection{Representative data}
\label{sec:rep}
A fundamental assumption underlying both the decision-theoretic
analysis of \secref{DT} and the alternative approach of
\secref{approach} is that the data available for estimating the
interventional probabilities $\Pr(Y=y, L=l \mid X\setto x)$ are on
individuals who can be regarded as ``similar to'' (``exchangeable
with'') the target case, so that these estimated probabilities are
applicable to the target.\footnote{For application to the MP arguments
  of \secref{approach}, the representativeness assumption should
  apparently be extended to the (typically unidentifiable) bivariate
  distribution, along with the other variables, of the pair of
  potential responses $(Y(1),Y(0))$.  For the interval-valued
  inferences made, however, this is not crucial, since these allow for
  arbitrary dependence in this bivariate distribution.}  In reality
this is implausible.  A clinical trial will have entry criteria and
processes that make its subjects quite untypical of the population
from which they are drawn, or indeed of the individuals recruited into
another such trial.  In any case, despite the name, entry criteria
govern who does {\em not\/} get into a trial: they cannot guarantee
that those who enter are representative even of a target individual
meeting the same criteria.

A clinical trial gains its value, not from representativeness, but
from the internal randomisation that ensures that a comparison between
its treated and untreated groups is indeed a comparision of like with
like, and that valid probability statements can be made about likely
differences, so enforcing internal validity.  Because of
unrepresentativeness it would not be appropriate to regard
$\Pr(Y=y, L=l \mid X\setto x)$, estimated from the data, as being
directly relevant to the target case---the problem of external
validity.  (One cheating way round this is to focus on a hypothetical
target individual who {\em can\/} be assumed exchangeable with those
in the study.)  Nevertheless, it may often be reasonable to regard the
estimated \ate\ or \cate\ as applying to the target---if not in its
exact numerical value, at least in its sign, which is all that is is
required, for DT application, to solve the single patient treatment
problem; or in its ordering of the $\cate_i$, as required to solve the
DT unit selection problem.

To underline how unreasonable the representative assumption is, it
should be noted that even when clinical trials with similar protocols
are compared this assumption is not made.  A striking example of its
failure for nearly identical protocols is given by the TARGET study
\citep{senn:target}, in which osteoarthritis patients in some centres
were randomised to receive either lumiracoxib or naproxen, and
patients in other centres either lumiracoxib or ibuprofen.  The degree
of comparability in design of the two sub-studies thus defined was
greater than one would typically expect between two randomised
controlled trials (RCTs), and {\em a fortiori\/} than between an RCT
and an observational study, such as MP consider.  Nevertheless, very
important differences at baseline were seen between the two
sub-studies, even though within-sub-study treatment arms were
comparable.  Furthermore, it was possible to demonstrate differences
at outcome between the two studies using lumiracoxib data only, a
striking illustration of a study effect.  It is generally accepted by
sponsors and regulators that as soon as concurrent control is
abandoned the greatest of care must be taken in drawing inferences.
Modern work on using data on historical controls to try and improve
the efficiency of clinical trials takes such study-to-study variation
as a given that must be allowed for \citep{schmidli}.

\subsection{Combination of data}
\label{sec:comb2}
An essential requirement for the application of \thmref{dawid} is that
the observational and experimental datasets comprise similar
individuals, so that the same probabilities for $X, X^*, Y$ apply to
both groups.  This is even more implausible than the
representativeness of either group.  In particular, the assumption of
a common distribution for the preferred treatment $X^*$, in both the
experimental and observational datasets and in the target patient, is
vital but highly questionable.  For a trial to be ethical, there
should be a degree of equipoise, with no strong belief about the
difference between the treatments.  A patient who has a strong
preference (our $X^*$) for one of the treatments is unlikely to agree,
or to be allowed, to be enrolled in the trial, so biasing the
distribution of $X^*$ in the trial.  So even if we were to accept the
mathematical arguments of MP based on combining observational and
experimental data, without this distributional invariance property
they are simply irrelevant.

\subsection{What do clinical trialists do in practice?}
\label{sec:clin}
The key to using RCTs is to identify reasonable assumptions, and use
theory to transfer results from trial to practice.  A striking example
is given by bioequivalence studies.  The subjects are usually young
healthy volunteers, frequently male.  However, the results will be
used to decide on appropriate treatments for elderly frail patients,
some female.  There is no pretence of representativeness.  Instead,
tight control and sensible scales of analysis are used.  The purpose
of such studies is to compare two formulations in terms of
bioavailability, and this is typically done using a cross-over trial
in which each subject is their own control, the order of
administration being randomised.  On separate days, concentration of
the test and reference pharmaceuticals are measured, and the ratio of
the areas under the two concentration time curves (AUCs) is calculated
for each subject, then analysed over all subjects, typically after
log-transformation.  What is relevant for treating an individual
patient is their own AUC: too low and efficacy may be disappointing,
too high and the drug may not be tolerated.  However, no inference is
made from a bioequivalence study in terms of AUCs alone, since they
would be quite different in healthy volunteers and patients.  Instead,
the idea is that the ratio between test and reference ought to be the
same in volunteers and patients, and this ratio can be used to make
predictions as to how the test drug will behave in clinical practice.
An interesting example of such a study is reported by
\citet{shumaker}.  They used a more elaborate design in which test and
reference drugs were given in a double cross-over, thus permitting
them to analyse the formulation-by-subject interaction.  They were
able to demonstrate that there was no evidence of an individual
bioequivalence effect: although you could estimate the individual
relative bioavailability, using the average over all subjects would be
superior than any such naïve estimate.  This raises a further issue
with MP, who assume that individual causal effects are stable over
time.  Moreover, causal identification analysis, such as conducted
here, essentially assumes an infinite sample size---but no infinities
are available for individual subjects, and estimating individual
causal effects requires close attention to components of variance.
Bioequivalence studies are an extreme example, but the general idea of
transferring results using a suitable scale for analysis, and
back-transforming to a scale suitable for decision analysis, is
commonplace: see \citet{lubsen} for a general discussion and
\citet{senn:covid} in the specific context of vaccine efficacy.  Of
course, as the COVID-19 pandemic has reminded us, there are no
guarantees.  Things that work at one time may not do so at another.
It behoves all those proposing solutions to be cautious and humble.

%%% Local Variables:
%%% mode: latex
%%% TeX-master: "main2"
%%% End:

\section{Discussion}
\label{sec:disc}
We have given careful accounts of the DT and MP approaches to
individualised treatment choice.  The DT approach is simple in the
extreme, and selects the treatment strategy that maximises the number
of recoveries.  In contrast, the MP approach fixates on
philosophically questionable and unknowable counterfactual concepts,
and when its recommendations differ from those of DT will lead to
fewer recoveries.  This has been illustrated in a number of examples.

One feature of the MP approach is the combination of experimental and
observational data.  When some very strong conditions are satisfied,
this permits identification of the distribution of a special
covariate, the preferred treatment (PT).  As with any other covariate
whose distribution is known, this can then feed back to tighten the MP
interval inferences.

However---in those rare cases where the strong requisite assumptions
may be acceptable---there is a better approach.  We learn from the
combination of experimental and observational data that preferred
treatment, PT, is a covariate that, if observed prior to the point of
decision, could improve decision-making: formulae
\eqref{11}--\eqref{01} supply the information needed to conduct this.
In particular we have shown that, in just those very special cases
that use of PT leads to point identification of the MP probabilities
of benefit and of harm, knowledge of the target patient's PT value
allows perfect prediction of the outcome under at least one of the
treatment interventions, and so to a trivial solution to the decision
problem.  Since this particular variable should be readily available
(though it may be difficult to extract in an unbiased fashion), this
is an easy win. \citet{mats:rejoinder} discuss the
relationship between this ``superoptimal'' decision-theoretic
approach, and the approach of MP.  We argue that this would be a much
better use of the combined data.

The DT approach has a long history of fruitful application to an
enormous variety of fields, from clinical trials to rocket science.
Attempts to replace it with another approach, based on
counterfactuals, are totally unnecessary and dangerously misguided.
Such an approach should not be applied in practice.

%%% Local Variables:
%%% mode: latex
%%% TeX-master: "main2"
%%% End:

\section*{Acknowledgments}
We have benefited greatly from discussions with Mats Stensrud and
Aaron Sarvet.
 
% \noindent{\bf Conflict of interest:} Prof. Philip Dawid is a member of
% the Editorial Board in the Journal of Causal Inference but was not
% involved in the review process of this article.

%\bibliographystyle{unsrt}
%\bibliogra§phystyle{plainnat}
\bibliography{strings,causal,references}

%\appendix

\end{document}